\newcommand{\etal}{\emph{et al.}\xspace}
\newcommand{\comp}{\textsf{Companion}}
\newcommand{\F}{\mathbb{F}}
\newcommand{\C}{\Gamma}
\renewcommand{\a}{\alpha}
\renewcommand{\b}{\beta}
\newcommand{\z}{z}
\newcommand{\Min}{\mathrm{Min}}
\newcommand{\ord}{\mathrm{ord}}
\newcommand{\lcm}{\mathrm{lcm}}
\title{Direct Construction of Recursive MDS Diffusion Layers using Shortened BCH Codes}
\author{Daniel Augot\inst{1}\and Matthieu Finiasz\inst{2}}
\institute{INRIA - LIX UMR 7161 X-CNRS \and CryptoExperts}
\begin{document}

\maketitle

\begin{abstract}
  MDS matrices allow to build optimal linear diffusion layers in block
  ciphers. However, MDS matrices cannot be sparse and usually have a
  large description, inducing costly software/hardware
  implementations. Recursive MDS matrices allow to solve this problem
  by focusing on MDS matrices that can be computed as a power of a
  simple companion matrix, thus having a compact description suitable even
  for constrained environments. However, up to now, finding recursive
  MDS matrices required to perform an exhaustive search on families of
  companion matrices, thus limiting the size of MDS matrices one could
  look for. In this article we propose a new \emph{direct}
  construction based on shortened BCH codes, allowing to efficiently
  construct such matrices for whatever parameters. Unfortunately, not
  all recursive MDS matrices can be obtained from BCH codes, and our
  algorithm is not always guaranteed to find the best matrices for a
  given set of parameters.
  \keywords{Linear diffusion, recursive MDS matrices, BCH codes.}
\end{abstract}

\section{Introduction}

Diffusion layers are a central part of most block cipher
constructions. There are many options when designing a diffusion
layer, but linear diffusion is usually a good choice as it can be
efficient and is easy to analyze. The quality of a linear diffusion
layer is connected to its \emph{branch number}~\cite{daemen-thesis}: the minimum over all
possible nonzero inputs of the sum of the Hamming weights of the input and the
corresponding output of this diffusion layer. A high branch number
implies that changing a single bit of the input will change the output
a lot, which is exactly what one expects from a good diffusion layer.
Before going into more details on how to build linear diffusion with a
high branch number, let us recall some elements of coding theory.

\subsubsection{Linear diffusion and coding theory.}

A linear code $\C$ of dimension $k$ and length $n$ over $\F_q$
(denoted as an $[n,k]_q$ code) is a vectorial subspace of dimension
$k$ of $(\F_q)^n$. Elements of $\C$ are called code words. The minimal
distance $d$ of a code is the minimum over all nonzero code words $c\in\C$ of
the Hamming weight of $c$. A $[n,k]_q$ code of minimal distance $d$
will be denoted as an $[n,k,d]_q$ code. A generator matrix $G$ of a
code is any $k\times n$ matrix over $\F_q$ formed by a basis of the
vectorial subspace $\C$. We say a generator matrix is in systematic
form when it contains (usually on the left-most positions) the
$k\times k$ identity matrix $I_k$. The non-systematic part (or redundancy
part) of $G$ is the $k\times (n-k)$ matrix next to this identity
matrix.

Now, suppose a linear diffusion layer of a block cipher is defined by an invertible
matrix $M$ of size $k\times k$ over $\F_q$, so that an input
$x\in(\F_q)^k$ yields an output $y\in(\F_q)^k$ with $y = x\times
M$. Then, the $k\times 2k$ generator matrix $G_M$ having $M$ as its
non-systematic part (the matrix defined as the concatenation of the
$k\times k$ identity matrix $I_k$ and of $M$, as $G_M=[I_k\mid M]$)
generates a $[2k,k]_q$ code $\C_M$ whose minimal distance is exactly
the branch number of $M$. Indeed, a code word $c = x\times G_M$ in
$\C_M$ is the concatenation of an input $x$ to the diffusion layer and
the corresponding output $y = x\times M$. So the Hamming weight of
every code word is the sum of the Hamming weights of an input and its
output.

Optimal linear diffusion can thus be obtained by using codes with the
largest possible minimal distance, namely maximum distance separable
(MDS) codes. A $[n,k]_q$ code is called MDS if its minimal distance is
$d = n-k+1$. By extension, we will say that a matrix $M$ is MDS when
its concatenation with the identity matrix yields a generating matrix
$G_M$ of an MDS code $\C_M$. In the context of diffusion where $n=2k$
being MDS means that $d = k+1$: changing a single element in the input
of the diffusion layer will change all the elements in its output.

We also recall the MDS conjecture: if there exists an $[n,k]_q$ MDS
code, meaning an MDS code of length $n$ and dimension $k$
over $\F_q$, then $n\leq q+1$, except for particular
cases which are not relevant to our context. All along this article we
will assume that this conjecture holds~\cite{sloane}.

\paragraph{Note on Vector Representation.} In coding theory, vectors
are usually represented as rows (with $y = x\times M$), as we have
done for the moment. In cryptography, however, they are more often
represented as columns (with $y = M^T\times x$). Luckily, the
transposed of an MDS matrix is also MDS, so if $G_M$ defines an MDS
code, both $M$ and $M^T$ can be used as MDS diffusion matrices. In the
rest of the article we will use the column representation, which
people used to the AES and the MixColumns operation are more familiar
with: the diffusion layer defined by a matrix $M$ computes $y=M\times
x$. This way, the branch number of $M$ is the minimal distance of the
code generated by $G_{M^T} = [I_k\mid M^T]$. However, in order to
avoid matrix transpositions, we will rather check wether $G_M =
[I_k\mid M]$ generates an MDS code or not.

\subsubsection{Recursive MDS matrices.} MDS matrices offer optimal
linear diffusion, but in general, they do not allow for a very compact
description. Indeed, the non-systematic part $M$ of an MDS generator
matrix cannot contain any 0 element\footnote{If the non-systematic
  part $M$ of an MDS generator matrix contained a 0, then the line of
  $G_M$ containing this zero would have Hamming weight $\leq k$, which
  is in contradiction with the minimal distance of the code. More
  generally, for an MDS code $\C_M$, for any $i\leq k$ all the
  $i\times i$ minors of $M$ must be non-zero.}. These matrices can
never be sparse and applying such a matrix to its input requires a full
matrix multiplication for the diffusion. Several different techniques
have been studied to obtain \emph{simpler} MDS matrices, a well known
example being circulant matrices (or modifications of circulant matrices) as
used in the AES~\cite{aes} or FOX~\cite{fox}. Recently a new
construction has been proposed: the so-called \emph{recursive} MDS
matrices, that were for example used in Photon~\cite{photon} or
LED~\cite{LED}. These matrices have the property that they
can be expressed as a power of a companion
matrix $C$. For example, in Photon, using the same decimal
representation of elements of $\F_{256}$ as in~\cite{photon}:

$$\arraycolsep=2pt M = \begin{pmatrix}1 & 2 & 1 & 4\\
4 & 9 & 6 & 17\\
17 & 38 & 24 & 66\\
66 & 149 & 100 & 11\end{pmatrix} = C^4,
\ \textrm{with}\
\arraycolsep=4pt C = \begin{pmatrix}0 & 1 & 0 & 0\\
0 & 0 & 1 & 0\\
0 & 0 & 0 & 1\\
1 & 2 & 1 & 4\end{pmatrix} = \comp(1,2,1,4).$$

The advantage of such matrices is that they are particularly well
suited for lightweight implementations: the diffusion layer can be
implemented as a linear feedback shift register that is clocked 4
times (or more generally $k$ times), using a very small number of
gates in hardware implementations, or a very small amount of memory
for software. The inverse of the diffusion layer also benefits from a
similar structure, see Eq.~\eqref{eq.regular_inverse} for a particular
case.

\subsubsection{Outline.} In the next section, we will present previous
methods that have been used to find recursive MDS matrices. Then, in
Section~\ref{sec.BCH}, we will introduce BCH codes and shortened BCH
codes, show that they too can yield recursive MDS matrices, and give a
direct construction of such matrices. In Section~\ref{sec.algo} we
will then describe an algorithm to explore all BCH codes and the MDS
diffusion layers they yield for given parameters. We will conclude
with a few experimental results.

\section{Exhaustive Search for Recursive MDS Matrices}

Exhaustive search for recursive MDS matrices can be quite straightforward:
\begin{itemize}
\item pick some parameters: the matrix size $k$ and the field size $q = 2^s$,
\item loop through all companion matrices $C$ of size $k$ over $\F_q$,
\item for each $C$, computes its $k$-th power and check if it is MDS.
\end{itemize}
However, this technique is very expensive as there are many companion
matrices ($2^{ks}$, which could be $2^{128}$ for an 128-bit cipher)
and checking if a matrix is MDS is also expensive (the number of
minors to compute is exponential in $k$). Also, it does not specially
explore the most efficient matrices first. In the Photon example, the
matrix uses very sparse coefficients (the field elements represented
by 1, 2 and 4) to make the implementation of their operations on inputs even more
efficient. Exhaustive search should focus on such matrices.

Following this idea, Sajadieh \etal~\cite{Sajadieh:FSE2012} proposed
to split the search in two. Their companion matrices are symbolic
matrices $C(X)$ which have coefficients in the polynomial ring
$\F_q[X]$ where $X$ is an indeterminate, which will be substituted
later by some $\F_2$-linear operator $L$ of $\F_q$. Then their search
space is reduced to symbolic companion matrices $C(X)$ whose
coefficients are small degree polynomials in $X$ (small degree
polynomials will always yield a rather efficient matrix). Once $C(X)$
is raised to the power $k$, to get $D(X)=C(X)^k$, the matrix $D(X)$
will give an MDS matrix $D(L)$ when evaluated at a particular $L$, if
for all $i\leq k$, all its $i\times i$ minors evaluated at $L$ are
invertible matrices (non-zero is enough in a field, but now the
coefficients are $\F_2$-linear operators). Indeed, for a symbolic
matrix $D(X)$, the minors are polynomials in $X$, and their evaluation
at a particular linear operator $L$ needs to be invertible matrices.

This way, for each matrix $C(X)$ explored during the search, the
minors of all sizes of $D(X)=C(X)^k$ are computed: some matrices have
minors equal to the null polynomial and can never be made MDS when $X$
is substituted by a linear operator $L$, for the others this gives
(many) algebraic polynomials in $X$ which must not vanish when
evaluated at $L$, for the $k$-th power $D(L)$ to be MDS. Then, the
second phase of the search of Sajadieh \etal is to look for efficient
operators $L$ such that all the above minors are non zero when
evaluated at $L$. The advantage of this technique is that it finds
specially efficient recursive MDS matrices, but the computations of
the minors of symbolic matrices can be pretty heavy, because of the
growth of the degree of the intermediate polynomials involved. In the
case of Photon, the matrix could be found as $C = \comp(1, L, 1, L^2)$
where $L$ is the multiplication by the field element represented by 2.

Continuing this idea and focusing on hardware implementation, Wu,
Wang, and Wu~\cite{WuWangWu:SAC2012} were able to find recursive MDS
matrices using an impressively small number of XOR gates. They used a
technique similar to Sajadieh~\etal, first searching for symbolic
matrices with a list of polynomials having to be invertible when
evaluated in $L$, then finding an $\F_2$-linear operator $L$ using a
single XOR operation and with a minimal polynomial not among the list
of polynomials that have to be invertible.

Then, looking for larger recursive MDS matrices, Augot and
Finiasz~\cite{augot-finiasz-isit13} proposed to get rid of the
expensive symbolic computations involved in this technique by choosing
the minimal polynomial of $L$ \emph{before} the search of companion
matrices $C(X)$. Then, all computation can be done in a finite field (modulo
the chosen minimal polynomial of $L$), making them much faster. Of
course, assuming the MDS conjecture holds, the length of the code
cannot be larger than the size of the field plus one, so for an $L$
with irreducible minimal polynomial of degree $s$, the field is of
size $q=2^s$, and $k$ must verify $2k\leq 2^s+1$. Larger MDS matrices
will require an operator $L$ with a higher degree minimal
polynomial. Also, in the case where the bound given by the MDS
conjecture is almost met (when $k = 2^{s-1}$), Augot and Finiasz noted
that all companion matrices found had some kind of symmetry: if the
$k$-th power of $\comp(1,c_1,c_2,\dots,c_{k-1})$ is MDS, then $c_i =
c_{k-i}$ for all $1\leq i\leq\frac{k-1}2$.

\subsection{An Interesting Example}\label{sec.example}

One of the \emph{symmetric} MDS matrices found by Augot and Finiasz~\cite{augot-finiasz-isit13} for $k=8$ and $\F_q = \F_{16}$ is
\[C = \comp(1,
\a^3, \a^4, \a^{12}, \a^8, \a^{12}, \a^4, \a^3)
\] with $\a^4+\a+1 = 0$.
 As we will see later, there is a strong link
between companion matrices and the associated polynomial, here
\[
P_C(X) = 1 + \a^3 X + \a^4 X^2+ \a^{12} X^3+ \a^8 X^4+
\a^{12}X^5 + \a^4 X^6+ \a^3X^7+X^8.
\] In this example, this polynomial factors into terms of degree two:
\[
P_C(X)=(1+\a^2X+X^2)(1+\a^4X+X^2)(1+\a^8X+X^2)(1+\a^9X+X^2),
\] meaning that $P_C(X)$ is split in a degree-2 extension of
$\F_{16}$, the field $\F_{256}$.

If we now consider $P_C(X)$ in $\F_{256}[X]$, which we can, since
$\F_{16}$ is a subfield of $\F_{256}$, and look for its roots in
$\F_{256}$, we find that there are 8 roots in $\F_{256}$, which, for a
certain primitive $255$-th root of unity $\b\in\F_{256}$, are
\[[\b^5, \b^6, \b^7,\b^8,\b^9,\b^{10},\b^{11}, \b^{12}].\]
This indicates a strong connection with BCH codes that we will now
study.

\section{Cyclic Codes, BCH Codes, and Shortening}\label{sec.BCH}

Before jumping to BCH codes, we must first note a few things that are
true for any cyclic code and not only BCH codes. For more details on
the definition and properties of cyclic codes, the reader can refer
to~\cite{sloane}.

\subsection{A Systematic Representation of Cyclic Codes}

An $[n,k]_q$ code is said to be cyclic if a cyclic shift of any
element of the code remains in the code. For example, the code defined
by the following generator matrix $G$ over $\F_2$ is cyclic:
$$G = \arraycolsep=4pt \begin{pmatrix}
  1 & 0 & 1 & 1 & 0 & 0 & 0\\
  0 & 1 & 0 & 1 & 1 & 0 & 0\\
  0 & 0 & 1 & 0 & 1 & 1 & 0\\
  0 & 0 & 0 & 1 & 0 & 1 & 1\end{pmatrix}.$$

\noindent A cyclic shift to the right of the last line of $G$ gives
$(1,0,0,0,1,0,1)$ which is the sum of the first, third and last lines
of $G$, thus remains in the code: $G$ indeed generates a cyclic code.

Cyclic codes can also be defined in terms of polynomials:
$(1,0,1,1,0,0,0)$ corresponds to $1+X^2+X^3$ and a cyclic shift to the right
is a multiplication by $X$ modulo $X^n-1$. This way, cyclic codes can
be seen as ideals of $\F_q[X]/(X^n-1)$, meaning that each cyclic code
$\C$ can be defined by a generator polynomial $g(X)$ such that $\C=<g(X)>$
and $g(X)$ divides $X^n-1$. Then, the code defined by $g(X)$ has
dimension $k = n-\deg(g)$. In our example, $g(X) = 1+X^2+X^3$, which
divides $X^7-1$, and the code is indeed of dimension 4.

Any multiple of $g(X)$ is in the code, so for any polynomial $P(X)$ of
degree less than $n$, the polynomial $P(X) - (P(X)\bmod g(X))$ is in
the code. Using this property with $P(X) = X^i$ for $i\in[\deg(g),
n-1]$, we obtain an interesting systematic form for any cyclic code
generator matrix:

$$G = \arraycolsep=4pt \begin{pmatrix}
-X^3\bmod g(X)\smash{\hskip3pt\vrule width 1pt height 3mm depth 13.7mm\hskip-4pt}& 1 & 0 & 0 & 0\\
-X^4\bmod g(X)& 0 & 1 & 0 & 0\\
-X^5\bmod g(X)& 0 & 0 & 1 & 0\\
-X^6\bmod g(X)& 0 & 0 & 0 & 1\end{pmatrix} =  \begin{pmatrix}
1 & 0 & 1\smash{\hskip4pt\vrule width 1pt height 3mm depth 13.7mm\hskip-5pt}& 1 & 0 & 0 & 0\\
1 & 1 & 1 & 0 & 1 & 0 & 0\\
1 & 1 & 0 & 0 & 0 & 1 & 0\\
0 & 1 & 1 & 0 & 0 & 0 & 1\end{pmatrix}.$$

This form is exactly what we are looking for when searching for powers
of companion matrices. Indeed, if we associate the companion matrix $C
= \comp(c_0,\dots,c_{k-1})$ to the polynomial $g(X) =
X^k+c_{k-1}X^{k-1}+\dots+c_0$, then the successive powers of $C$ are (continuing with our example where $k=3$):
$$C = \begin{pmatrix}0\hskip5mm 1\hskip5mm  0\\
0\hskip5mm  0\hskip5mm  1 \\
-X^3\bmod g(X)\end{pmatrix}\hskip-.5mm,\ C^2 = \begin{pmatrix}0\hskip5mm  0\hskip5mm  1 \\
-X^3\bmod g(X)\\
-X^4\bmod g(X)\end{pmatrix}\hskip-.5mm,\ C^3 = \begin{pmatrix}-X^3\bmod g(X)\\
-X^4\bmod g(X)\\
-X^5\bmod g(X)\end{pmatrix}\hskip-.5mm.$$

\noindent To build recursive MDS matrices we thus simply need to build MDS cyclic codes with suitable parameters and
their corresponding $g(X)$.

Note that a multiplication by a companion matrix can also be expressed
in terms of LFSR. Initializing the LFSR of \figurename~\ref{fig.lfsr}
with a vector and clocking it once corresponds to the multiplication
of this vector by $C$. Clocking it $k$ times corresponds to the
multiplication by $M=C^k$. We will continue using the matrix
representation in the rest of the paper, but most results could also
be expressed in terms of LFSR.

\begin{figure}[t]
  \centering
  \includegraphics[width=5cm]{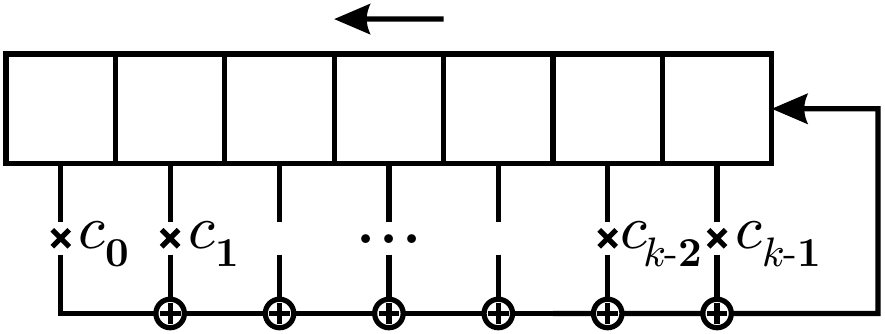}
  \caption{\label{fig.lfsr}An LFSR corresponding to the companion
    matrix $C$ of polynomial $g(X) = X^k+c_{k-1}X^{k-1}+...+c_0$. Clocking it $k$ times is equivalent to applying $C^k$ to its internal state.}
\end{figure}

\subsection{BCH Codes and Shortened BCH Codes}

In general, given a generator polynomial $g(X)$, computing the minimal
distance of the associated cyclic code is a hard problem. For
instance, the code generated by $g(X) = 1+X^2+X^3$ in the example of
the previous section has minimal distance 3, but even for such small
examples it is not necessarily immediate to find the minimum
distance. Nonetheless, lower bounds exist for some specific
constructions. This is the case for BCH codes, as described for
example in~\cite{sloane}.

\begin{definition}[BCH codes]
  A BCH code over $\F_q$ is defined using an element $\b$ in some
  extension $\F_{q^m}$ of $\F_q$. First, pick integers $\ell$ and $d$
  and take the $(d-1)$ consecutive powers $\b^\ell,
  \b^{\ell+1},\dots,\b^{\ell+d-2}$ of $\b$, then compute $g(X) =
  \lcm(\Min_{\F_q}(\b^\ell),\dots,\Min_{\F_q}(\b^{\ell+d-2}))$, where
  $\Min_{\F_q}(\b^\ell)$ is the minimal polynomial of $\b^\ell$ over
  $\F_q$.

  The cyclic code over $\F_q$ of length $\ord(\b)$ defined by $g(X)$
  is called a \emph{BCH code}, it has dimension $(\ord(\b)-\deg(g))$ and has
  minimal distance at least $d$. We write this as being an $[\ord(\b),
  \ord(\b)-\deg(g), \geq d]_q$ code.
\end{definition}

For such a BCH code to be MDS, $g(X)$ must have degree exactly $d-1$
(for a cyclic code $\deg(g(X)) = n-k$ and for an MDS code $d=n-k+1$,
so an MDS BCH code necessarily verifies $\deg(g(X)) = d-1$). Seeing that
$g(X)$ already has $d-1$ roots over $\F_{q^m}$, it cannot have any other
roots. This means that the powers $\b^{\ell+j}$, $j=0,\dots,d-2$, must
all be conjugates of each other.

\subsubsection{The need for shortening.}

When building diffusion layers, the input and output of the diffusion
generally have the same size (otherwise inversion might be a problem),
so we need codes of length $2k$ and dimension $k$. In terms of BCH
codes, this translates into using $k$ consecutive powers of an element
$\b$ of order $2k$, and having $g(X)$ of degree $k$. Of course,
elements of even order do not exist in extensions of $\F_2$, so this
is not possible. Instead of using full length BCH codes, we thus use
\emph{shortened} BCH codes.

\begin{definition}[Shortened code]
  Given a $[n,k,d]_q$ code $\C$, and a set $I$ of $z$ indices
  $\{i_1,\dots,i_z\}$, the shortened code $\C_I$ of $C$ at indices from
  $I$ is the set of words from $\C$ which are zero at positions
  $i_1,\dots,i_z$, and whose zero coordinates are deleted, thus
  effectively shortening these words by $z$ positions. The shortened
  code $\C_I$ has length $n-\z$, dimension $\geq k-\z$ and minimal
  distance $\geq d$.

\end{definition}
  If $\C$ is MDS, then $d=n-k+1$ and $\C_I$ will necessarily be an $[n-\z, k-\z, d]_q$ MDS code, as neither the
  dimension nor the minimal distance can increase without breaking the Singleton bound~\cite{Singleton:IT64}.

  We can thus look for $[2k+\z, k+\z, k+1]_q$ BCH codes and shorten
  them on $\z$ positions to obtain our MDS codes. However, shortened
  BCH codes are no longer cyclic, so the shortening has to be done in a way that conserves
  the recursive structure. This is easy to achieve by using the
  previous systematic representation and shortening on the last
  positions. Starting from $g(X)$ of degree $k$, which divides
  $X^{2k+z}-1$, we get a generating matrix:
$$G = \arraycolsep=4pt \begin{pmatrix}X^k\bmod g(X)\smash{\hskip15pt\vrule width 1pt height 3mm depth 13.7mm\hskip-16pt}& 1 & 0 & 0 & 0\\
  X^{k+1}\bmod g(X)& 0 & 1 & 0 & 0\\
  \cdots &  & \hskip3pt\cdots\hskip-12pt\\
  X^{2k+\z-1}\bmod g(X)& 0 & 0 & 0 &
  1\end{pmatrix}.\hskip-2.1cm\lower.8cm\hbox{$\underbrace{\vrule height
    0pt depth 0pt width 1.7cm}_{\textrm{size } k+\z}$}$$ Shortening
the code on the $\z$ last positions will maintain the systematic form and
simply remove the $\z$ last lines to obtain:
$$G_I = \arraycolsep=4pt
\begin{pmatrix}
  X^k\bmod g(X)\smash{\hskip10pt\vrule width 1pt height 3mm depth 13.7mm\hskip-11pt}& 1 & 0 & 0 & 0\\
  X^{k+1}\bmod g(X)& 0 & 1 & 0 & 0\\
  \cdots &  & \hskip3pt\cdots\hskip-12pt\\
  X^{2k-1}\bmod g(X)& 0 & 0 & 0 & 1
\end{pmatrix}.\hskip-2.1cm\lower.8cm\hbox{$\underbrace{\vrule height
    0pt depth 0pt width 1.7cm}_{\textrm{size } k}$}$$
As said above, when $G$ generates an MDS code, then ${G_I}$ also
generates an MDS code, and this is (up to a permutation of the two $k\times k$ blocks, that will not affect the MDS property) exactly what we are
looking for: a recursive MDS matrix defined by the companion matrix
associated to the polynomial $g(X)$.

\subsection{Direct Construction of Recursive MDS Matrices}\label{sec.direct}

From this result, in the case where $q=2^s$, we can deduce a direct construction of recursive MDS matrices based on MDS BCH codes that were already described in~\cite{sloane}, Chapter~11, \S5. We first pick a $\b$ of order $q+1$. As
$q+1$ divides $q^2-1$, $\b$ is always in $\F_{q^2}$, the degree-2 extension of $\F_q$. Then, apart from $\b^0 = 1$, all
powers of $\b$ have minimal polynomials of degree 2: since $\b$ is of order $q+1$, each $\b^i$ has a conjugate $\b^{qi}=\b^{-i}$ which is the second root
of $\Min_{\F_q}(\b^i)$. From there, it is easy to build a $[q+1,q+1-k,k+1]_q$ MDS BCH code for any value of $k\leq\frac q2$.
\begin{itemize}
\item If $k$ is even, we need to select $k$ consecutive powers of $\b$ that are conjugates by pairs: if $\b^i$ is
  selected, $\b^{q+1-i}$ is selected too. We thus select all the powers $\b^i$ with $i\in [\frac {q-k}2+1,\frac
  {q+k}2]$, grouped around $\frac {q+1}2$.
\item If $k$ is odd, we need to select $\b^0$ as well. We thus select all the powers $\b^i$ with $i\in [-\frac
  {k-1}2,\frac {k-1}2]$, grouped around 0.
\end{itemize}
In both cases, we get a polynomial $g(X)$ of degree $k$ defining an MDS BCH code of length $q+1$. We can then shorten
this code on $z=(q+1-2k)$ positions and obtain the $[2k,k,k+1]_q$ MDS code we were looking for. The non-systematic part of
the generator matrix of this code is the $k$-th power of the companion matrix defined by $g(X)$.

Also, as the conjugate of $\b^i$ is its inverse, $g(X)$ enjoys the
same symmetry as the example of Section~\ref{sec.example}:
$X^kg(X^{-1}) = g(X)$. This explains the symmetry observed in~\cite{augot-finiasz-isit13}. Furthermore, the
companion matrix associated to $g(X)$ thus has at most $\frac k2$
different coefficients and can be implemented with at most $\frac k2$
multiplications.

Finally, by cycling over all $\b$ of order $q+1$, in the case where $2k=q$ we were able to recover with this direct construction all the solutions found
in~\cite{augot-finiasz-isit13} through exhaustive search. We conjecture that when $2k=q$, the only recursive MDS
matrices that exist come from these shortened BCH codes.

\section{An Algorithm to Find All MDS BCH Codes}\label{sec.algo}

We have seen that shortened BCH codes allow to directly build
recursive MDS matrices. However, when building a block cipher, the
designer usually has some parameters in mind (say, a diffusion layer
on $k$ symbols of $s$ bits each) and wants the \emph{best} diffusion
layer matching these parameters. Our direct construction gives good
solutions, but cannot guarantee they are the best. So the designer
needs an algorithm that will enumerate all possible matrices and let
him pick the most suitable one. For this, we will consider BCH codes
where $\b$ is a $(2k+\z)$-th root of unity and not only a $(2k+1)$-th
root of unity as in the direct construction. First, there are a few
constraints to consider.

\paragraph{Field Multiplication or $\F_2$-linearity?}

The designer has to choose the type of linearity he wants for his diffusion layer. If he wants (standard) linearity over
$\F_{2^s}$, then the BCH code has to be built over $\F_{2^s}$ (or a subfield of $\F_{2^s}$, but the construction is the
same). However, as in the Sajadieh~\etal~\cite{Sajadieh:FSE2012} or the Wu~\etal~\cite{WuWangWu:SAC2012} constructions,
he could choose to use an $\F_2$-linear operator $L$. Assuming $L$ has an irreducible minimal polynomial of degree
$s'\leq s$ (see~\cite{augot-finiasz-isit13} for how to deal with non-irreducible minimal polynomials), then he needs to build a BCH code
over $\F_{2^{s'}}$. This choice is up to the designer but does not change anything to the rest of the
algorithm, so we will assume $s' = s$.

\paragraph{The MDS Conjecture.}

Our shortened BCH construction starts by building an MDS code of length $2k+\z$ over $\F_{2^s}$. The MDS conjecture tells
us that $2k+\z\leq 2^s+1$ must hold. When $k=2^{s-1}$, $\z=1$ is the only choice. In general, we can choose any
$\z\in[1,2^s+1-2k]$, so the algorithm will need to try all these possibilities.

\paragraph{Minimal Polynomials of Roots of Unity.}

The $\b$ involved in the BCH construction is a $(2k+\z)$-th root of
unity, and $g(X)$ is formed as the product of minimal polynomials of
powers of $\b$. First, $(2k+\z)$-th roots of unity must exist, meaning
$2k+\z$ must be odd (or more generally coprime with $q$ when $q$ is
not $2^s$). Then, when factorizing $X^{2k+\z}-1$, the minimal
polynomials of the $\b^i$ are factors of this decomposition, and
$g(X)$ is the product of some of these factors. It must thus be
possible to obtain a polynomial of degree $k$ this way. This is not
always possible: for example, $X^{23}-1$ decomposes over $\F_{2^8}$ in
a factor of degree 1 and two factors of degree 11 and very few values of $k$
can be obtained. However, this last condition is rather complex to
integrate in an algorithm and it will be easier to simply not take it
into account.

\subsection{A Simple Algorithm}

For given parameters $k$ and $q=2^s$ we propose to use
Algorithm~\ref{alg.BCH} to enumerate all possible recursive MDS
matrices coming from shortened BCH codes. This algorithm explores all
code lengths from $2k+1$ to $q+1$, meaning that the number of
shortened columns can be much larger than the final code we are aiming
for.  Instead of computing minimal polynomials and their least common
multiple as in the definition of BCH codes we directly compute
$\prod_{j=0}^{k-1}(X-\b^{\ell+j})$ and check if it is in
$\F_q[X]$. This allows the algorithm to be more efficient and also
makes upper bounding its complexity much easier. The following lemma shows that the two formulations are equivalent.

\begin{lemma}
  A BCH code over $\F_q$ defined by the $d-1$ roots
  $[\b^\ell,...,\b^{\ell+d-2}]$ is MDS, if and only if $P(X) =
  \prod_{j=0}^{d-2}(X-\b^{\ell+j})$ is in $\F_q[X]$. In this case,
  $g(X) = \lcm\big(\Min_{\F_q}(\b^\ell),
  ... ,\Min_{\F_q}(\b^{\ell+d-2})\big)$ is equal to $P(X)$.
\end{lemma}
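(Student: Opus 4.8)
The plan is to prove the two implications separately, using the characterization of MDS BCH codes established earlier in the excerpt, namely that a BCH code defined by $d-1$ roots is MDS precisely when $\deg g(X) = d-1$, i.e.\ when $g(X)$ has no roots beyond the prescribed $\b^\ell,\dots,\b^{\ell+d-2}$. Throughout, recall that $g(X) = \lcm\big(\Min_{\F_q}(\b^\ell),\dots,\Min_{\F_q}(\b^{\ell+d-2})\big)$ is by construction the \emph{smallest} monic polynomial in $\F_q[X]$ having all of $\b^\ell,\dots,\b^{\ell+d-2}$ as roots, and that $P(X) = \prod_{j=0}^{d-2}(X-\b^{\ell+j})$ is a monic polynomial of degree exactly $d-1$ over $\F_{q^m}$ (the $\b^{\ell+j}$ are distinct since $\b$ has order $n = \ord(\b) \geq d-1$ in the relevant range — more precisely because consecutive powers within one period are distinct), which also has all these elements as roots.

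For the forward direction, suppose the code is MDS. Then $\deg g(X) = d-1$. Since $g(X)$ is monic of degree $d-1$ and has the $d-1$ distinct roots $\b^\ell,\dots,\b^{\ell+d-2}$, it must equal their product, so $g(X) = \prod_{j=0}^{d-2}(X-\b^{\ell+j}) = P(X)$; in particular $P(X) = g(X) \in \F_q[X]$. This simultaneously proves the ``only if'' direction and the final sentence of the lemma. For the converse, suppose $P(X) \in \F_q[X]$. Then $P(X)$ is a monic polynomial in $\F_q[X]$ vanishing at each $\b^{\ell+j}$, hence $g(X) \mid P(X)$ by minimality of $g(X)$. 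Conversely each root $\b^{\ell+j}$ of $g(X)$ appears in $P(X)$, and since both are over the field $\F_q$ (so $g \in \F_q[X] \subseteq \F_{q^m}[X]$) and $g(X) \mid X^n - 1$ which is squarefree, $g(X)$ is squarefree, so $g(X) \mid P(X)$ already gives $\deg g \leq d-1$; combined with $\deg g = n - k \geq d-1$ being impossible to exceed (indeed the BCH bound guarantees $d' \geq d$ where $d' = n-k+1 = \deg g + 1$, so $\deg g \geq d-1$ always), we get $\deg g(X) = d-1$, whence the code is MDS, and moreover $g(X) = P(X)$ since $g$ is monic of degree $d-1$ dividing the monic degree-$(d-1)$ polynomial $P$.

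I would streamline the above by observing the single clean fact underlying both directions: $g(X)$ is the monic generator of the ideal of polynomials in $\F_q[X]$ vanishing on $\{\b^{\ell+j}\}$, $P(X)$ is the monic polynomial over $\F_{q^m}$ of minimal degree with that property, we always have $\deg g(X) \geq \deg P(X) = d-1$ (since $g$ vanishes on all $d-1$ distinct points), and $g = P$ iff $P \in \F_q[X]$ iff $\deg g(X) = d-1$ iff the code is MDS — the chain of equivalences closes because $g \mid X^n-1$ forces $g$ squarefree so that having the $d-1$ roots already pins down $P \mid g$, while $P \in \F_q[X]$ forces $g \mid P$.

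The main obstacle, and the only point requiring care, is justifying that the $d-1$ powers $\b^\ell,\dots,\b^{\ell+d-2}$ are pairwise distinct, so that $P(X)$ genuinely has degree $d-1$ and the divisibility arguments go through; this relies on $d-1 \leq \ord(\b) = n$, which is exactly the regime in which the BCH construction is used (and in the algorithm's setting $d-1 = k < 2k+z = n$), but it should be stated explicitly since without it $P$ could have repeated factors and the equivalence would fail. A secondary subtlety is confirming $\deg g = n - k$ and invoking the BCH lower bound $d \leq n-k+1$ to get $\deg g \geq d-1$ for the converse; both are quoted from the definition of BCH codes given above, so no new work is needed.
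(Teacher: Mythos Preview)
Your argument is correct and follows essentially the same route as the paper: both directions hinge on the two divisibilities $P \mid g$ (always, since $g$ is squarefree and has the $\b^{\ell+j}$ among its roots) and $g \mid P$ (exactly when $P \in \F_q[X]$), which together force $g = P$ and hence $\deg g = d-1$. The paper's converse is a little cleaner than your first pass: once both divisibilities are established, $g = P$ is immediate and there is no need for the detour through the Singleton and BCH bounds --- your phrase ``$d' = n-k+1 = \deg g + 1$'' is in fact miswritten, since the true minimum distance $d'$ only satisfies $d' \leq n-k+1$, though the intended chain $\deg g + 1 = n-k+1 \geq d' \geq d$ does work. Your own streamlined paragraph already gives the simpler reason $\deg g \geq d-1$ (namely, $g$ has $d-1$ distinct roots), which is all that is needed. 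Your explicit remark that one must know the $\b^{\ell+j}$ are pairwise distinct is a worthwhile addition that the paper leaves implicit.
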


\begin{proof}
  We have seen that a BCH code is MDS if and only if $g(X)$ is of
  degree $d-1$ exactly. Also, $g(X)$ is always a multiple of $P(X)$.

  First, assume we have an MDS BCH code. Then $g(X)$ is of degree
  $d-1$ and is a multiple of $P(X)$ which is also of degree $d-1$. So,
  up to a scalar factor, $g(X) = P(X)$ and $P(X) \in \F_q[X]$.

  Conversly, assume we have a BCH code such that
  $P(X)\in\F_q[X]$. Then, for any $j\in[0,d-2]$, $P(X)$ is a
  polynomial in $\F_q[X]$ having $\b^{\ell+j}$ as a root, so $P(X)$ is
  a multiple of $\Min_{\F_q}(\b^{\ell+j})$. Therefore, $g(X)$ divides
  $P(X)$ and, as $P(X)$ also divides $g(X)$, we have $g(X) =
  P(X)$. $g(X)$ thus has degree $d-1$ and the code is MDS.\qed
\end{proof}

\begin{figure}[t]
\begin{algorithm}[H]
\caption{Search for Recursive MDS Matrices\label{alg.BCH}}
\DontPrintSemicolon
\KwIn{parameters $k$ and $s$}
\KwOut{a set $\mathcal S$ of polynomials yielding MDS matrices}
$q \leftarrow 2^s$\;
$\mathcal S \leftarrow \emptyset$ \;
\For{$\z \leftarrow 1$ \KwTo $(q+1-2k)$, \KwSty{with} $\z$ odd}{
  $\a \leftarrow$ primitive $(2k+\z)$-th root of unity of $\F_q$ \;
  \ForAll{$\b = \a^i$ \KwSty{such that}  $\ord(\b) = 2k+\z$}{
     \For{$\ell \leftarrow 0$ \KwTo $(2k+\z-2)$}{
       $g(X) \leftarrow \prod_{j=0}^{k-1}(X-\b^{\ell+j})$\label{line.gprod}\;
       \If(\hskip5mm\small \emph{(we test if $g$ has all its coefficients in $\F_q$)}){$g(X) \in \F_q[X]$}{
         $\mathcal S \leftarrow \mathcal S \cup \{g(X)\}$\;
       }
     }
  }
}
\Return{$\mathcal S$}
\end{algorithm}
\caption{Algorithm searching for MDS BCH codes}\label{fig:simple_algo}
\end{figure}

\subsection{Complexity}

The previous algorithm simply tests all possible candidates without trying to be smart about which could be eliminated
faster. It also finds each solution several times (typically for $\b$ and $\b^{-1}$), and finds some \emph{equivalent}
solutions (applying $\a \mapsto \a^2$ on all coefficients of the polynomial preserves the MDS property, so each
equivalence class is found $s$ times).

The product at line~\ref{line.gprod} does not have to be fully recomputed for each value of $\ell$. It can be computed
once for $\ell =0$, then one division by $(X-\b^\ell)$ and one multiplication by $(X-\b^{\ell+k})$ are enough to update
it at each iteration. This update costs $O(k)$ operations in the extension of $\F_q$ containing $\a$. The whole loop
on $\ell$ can thus be executed in $O((2k+\z)k)$ operations in the extension field.

The number of $\b$ for which the loop has to be done is Euler's phi
function $\varphi(2k+\z)$ which is smaller than $(2k+\z)$, itself
smaller than $q$, and there are $\frac{q-2k}2+1$ values of $\z$ to
test. This gives an overall complexity of $O(q^2k(q-2k))$ operations
in an extension of $\F_q$. This extension is of degree at most
$2k+\z$, so operations are at most on $q\log q$ bits in this extension
and cost at most $O(q^2(\log q)^2)$. This gives an upper bound on the
total complexity of $O\big(q^4k(q-2k)(\log q)^2\big)$ binary
operations, a degree-6 polynomial in $k$ and $q$. This is a quite
expensive, but as we will see in the next section, this algorithms
runs fast enough for most practical parameters. It should also be
possible to accelerate this algorithm using more elaborate computer
algebra techniques.

\section{Experimental Results}

We implemented Algorithm~\ref{alg.BCH} in Magma~\cite{magma} (see the code in Appendix~\ref{sec.magma}) and ran it for
various parameters.

\subsection{The Extremal Case: $2k=2^s$.}

First, we ran the algorithm for parameters on the bound given by the
MDS conjecture, that is, when $2k = 2^s$. These are the parameters
that were studied by Augot and Finiasz
in~\cite{augot-finiasz-isit13}. It took their algorithm 80 days of CPU
time to perform the exhaustive search with parameters $k=16$ and $s=5$
and find the same 10 solutions that our new algorithm finds in a few
milliseconds. The timings and number of solutions we obtained are
summarized in Table~\ref{tab.borneMDS}. We were also able to find much
larger MDS diffusion layers. For example, we could deal with $k=128$
elements of $s=8$ bits, which maybe is probably too large to be
practical, even with a recursive structure and the nice
symmetry. Below are the logs in base $\a$ (with
$\a^8+\a^4+\a^3+\a^2+1=0$) of the last line of the companion matrix of
an example of such 1024-bit diffusion:
\begin{multline*}
  [0, {83}, {25}, {136}, {62}, {8}, {73}, {112}, {253}, {110}, {246}, {156}, {53},
  {1}, {41}, {73}, {5}, {93}, {190}, {253}, {149},\\ {98}, {125}, {124}, {149}, {94},
  {100}, {41}, {37}, {183}, {81}, {6}, {242}, {74}, {252}, {104}, {57}, {117},
  {55}, {224},\\ {153}, {130}, {77}, {156}, {192}, {176}, {52}, {133}, {218}, {59},
  {158}, {18}, {228}, {89}, {218}, {126}, {146},\\ {210}, {217}, {18}, {84}, {209},
  {30}, {123}, {\boldsymbol{{97}}}, {123},  \dots \textrm{\small [ symmetric ]}\dots  , {83} ]
\end{multline*}

\begin{table}[t]
  \centering
  \caption{Experimental results for parameters on the bound given by MDS conjecture. The value ``diff. bits'' is the size in bits of the corresponding diffusion layer. The number of solutions is given as both the raw number and the number of distinct equivalence classes.\label{tab.borneMDS}}
  \tabcolsep=4pt
  \begin{tabular}{|c|c|c|c|c|c|}\hline
    \smash{\lower6pt\hbox{$k$}} & \smash{\lower6pt\hbox{$s$}} & diff. & \multicolumn{2}{c|}{solutions} & \smash{\lower6pt\hbox{time}} \\
        &     & bits  & num. & classes &                      \\\hline
    4   & 3   & 12    & 3    & 1       & $<$0.01s \\
    8   & 4   & 32    & 8    & 2       & $<$0.01s \\
    16  & 5   & 80    & 10   & 2       & $<$0.01s \\
    32  & 6   & 192   & 24   & 4       & $\sim$0.02s \\
    64  & 7   & 448   & 42   & 6       & $\sim$0.07s \\
    128 & 8   & 1024  & 128  & 16      & $\sim$0.52s \\
    256 & 9   & 2304  & 162  & 18      & $\sim$1.71s \\\hline
  \end{tabular}
\end{table}

\begin{table}[t]
  \centering
  \caption{Experimental results for other interesting parameters. The reg. solutions refer to regular solutions where the constant term of the polynomial is~1.\label{tab.gener}}
  \tabcolsep=4pt
  \begin{tabular}{|c|c|c|c|c|c|}\hline
    \smash{\lower6pt\hbox{$k$}} & \smash{\lower6pt\hbox{$s$}} &
diff. & \multicolumn{2}{c|}{solutions} & \smash{\lower6pt\hbox{time}}
\\ & & bits & num.  & reg. & \\\hline 4 & 4 & 16 & 68 & 12 &
$\sim$0.02s \\ 4 & 8 & 32 & 20180 & 252 & $\sim$37s \\ 8 & 8 & 64 &
20120 & 248 & $\sim$44s \\ 16 & 8 & 128 & 19984 & 240 & $\sim$55s \\
32 & 8 & 256 & 19168 & 224 & $\sim$80s \\\hline

  \end{tabular}
\end{table}

\subsection{The General Case} We also ran some computations for other
interesting parameters, typically for values of $k$ and $s$ that are
both powers of 2 as it is often the case in block ciphers. The results
we obtained are summarized in Table~\ref{tab.gener}. Note that for these
solutions the number of
shortened positions is sometime huge: for $k=4$ and $s=8$ one can
start from a $[257,253,5]_{256}$ BCH code and shorten it on 249 positions to obtain a
$[8,4,5]_{256}$ code. We counted both
the total number of solutions we found and the number of regular
solutions where the constant term of the polynomial is~1. Regular
solutions are particularly interesting as the diffusion and its
inverse share the same coefficients:
\begin{equation}\label{eq.regular_inverse} \comp(1,c_1,\dots,c_{k-1})^{-1}
=\arraycolsep=3pt \begin{pmatrix}0 & 1 & & 0\\ &
&\hskip-1mm\smash{\ddots}\hskip-1mm & \\ 0 & 0 & & 1\\1& c_1 & &
c_{k-1}\end{pmatrix}^{-1}\hskip-3mm = \begin{pmatrix}c_{1} & & c_{k-1} & 1\\1& &
0 & 0\\ & \hskip-1mm\smash{\ddots}\hskip-1mm & & \\0 & & 1 &
0\end{pmatrix}.
\end{equation} In the case of symmetric solutions (like those from
Section~\ref{sec.direct}), encryption and decryption can even use the
exact same circuit by simply reversing the order of the input and
output symbols. Here are some examples of what we found:
\begin{itemize}
\item for parameters $k=4$ and $s=4$, with $\a$ such that $\a^4+\a+1 = 0$, the matrices $\comp(1, \a^3, \a, \a^3)^4$ and $\comp(\a^3+\a,1,\a,\a^3)^4$ are MDS.
\item for parameters $k=4$ and $s=8$, with $\a$ such that $\a^8+\a^4+\a^3+\a^2+1 = 0$, the matrices $\comp(1, \a^3, \a^{-1}, \a^3)^4$, $\comp(1, \a^3+\a^2, \a^3, \a^3+\a^2)^4$, and $\comp(\a+1, 1, \a^{202} + 1, \a^{202})^4$ are MDS.
\end{itemize}

The reader might note the absence of larger fields in
Table~\ref{tab.gener}. One could for example want to obtain a 128-bit
diffusion layer using $k=8$ symbols of $s=16$ bits. However, going
through all the possible values of $\z$ and $\ell$ takes too long with
$q=2^{16}$. Our algorithm is too naive, and an algorithm enumerating
the divisors of $X^{2k+\z}-1$ of degree $k$ and checking if they
correspond to BCH codes could be faster in this case. Otherwise, it is
always possible to use the direct construction given in
Section~\ref{sec.direct}.

\subsection{Further Work}

As we have seen, for most parameters, this algorithm runs fast enough to find all recursive MDS matrices coming from BCH
codes. However, not all recursive MDS matrices come from a BCH code.
\begin{itemize}
\item First, there are other classes of cyclic codes that are MDS and could be shortened in a similar way. Any such class of
codes can directly be plugged into our algorithm, searching for polynomials $g(X)$ having another structure than roots
that are consecutive powers of $\b$.
\item Then, there also are cyclic codes which are not MDS, but become MDS once they are shortened. These will be much harder
to track as they do not have to obey the MDS conjecture and can have a much larger length before shortening.
\end{itemize}
For this reason, we are not always able (yet) to find the most
efficient matrices with our algorithm. For example, the matrix used in
Photon corresponds to a cyclic code of length $2^{24}-1$ over
$\F_{2^8}$ which is not MDS. We know that this code has minimum
distance 3, and its distance grows to 5 when shortened from the length
$2^{24}-1$ to the length 8.

However, for some parameters, our algorithm is able to find very nice
solutions. For $k=4$ and $\a$ verifying $\a^5+\a^2+1 = 0$ (a primitive
element of $\F_{2^5}$, or an $\F_2$-linear operator with this minimal
polynomial), the matrix $\comp(1,\a,\a^{-1},\a)$ found by
Algorithm~\ref{alg.BCH} yields an MDS diffusion layer. This is
especially nice because it is possible to build simple $\F_2$-linear
operators that also have a simple inverse, and this solution is
symmetric meaning the inverse diffusion can use the same circuit as
the diffusion itself.

\section{Conclusion}

The main result of this article is the understanding that recursive
MDS matrices can be obtained directly from shortened MDS cyclic
codes. From this, we derive both a direct construction and a very
simple algorithm, based on the enumeration of BCH codes, that allows
to efficiently find recursive MDS matrices for any diffusion and
symbol sizes. These constructions do not always find all existing
recursive MDS matrices and can thus miss some interesting
solutions. As part of our future works, we will continue to
investigate this problem, trying to understand what properties the
other solutions have and how we can extend our algorithm to find them
all. A first step is to elucidate the Photon matrix in terms of cyclic
codes which are not BCH codes, hopefully finding a direct construction
of this matrix. However, in the same way as computing the minimal
distance of a cyclic code is difficult, it might turn out that finding
all recursive MDS matrices of a given size is a hard problem.

\bibliographystyle{plain}

\begin{thebibliography}{}

\end{thebibliography}


\begin{thebibliography}{10}

\bibitem{augot-finiasz-isit13}
Daniel Augot and Matthieu Finiasz.
\newblock Exhaustive search for small dimension recursive {MDS} diffusion
  layers for block ciphers and hash functions.
\newblock In {\em Information Theory Proceedings (ISIT), 2013 IEEE
  International Symposium on}, pages 1551--1555. IEEE, 2013.

\bibitem{magma}
Wieb Bosma, John Cannon, and Catherine Playoust.
\newblock The {M}agma algebra system. {I}. {T}he user language.
\newblock {\em Journal of Symbolic Computation}, 24(3-4):235--265, 1997.

\bibitem{daemen-thesis}
Joan Daemen.
\newblock {\em Cipher and hash function design, strategies based on linear and
  differential cryptanalysis, {PhD} Thesis}.
\newblock K.U.Leuven, 1995.

\bibitem{aes}
Joan Daemen and Vincent Rijmen.
\newblock {\em The Design of {Rijndael}}.
\newblock Information Security and Cryptography. Springer, 2002.

\bibitem{photon}
Jian Guo, Thomas Peyrin, and Axel Poschmann.
\newblock The {PHOTON} family of lightweight hash functions.
\newblock In Phillip Rogaway, editor, {\em Crypto 2011}, volume 6841 of {\em
  Lecture Notes in Computer Science}, pages 222--239. Springer, 2011.

\bibitem{LED}
Jian Guo, Thomas Peyrin, Axel Poschmann, and Matthew J.~B. Robshaw.
\newblock The {LED} block cipher.
\newblock In Bart Preneel and Tsuyoshi Takagi, editors, {\em CHES 2011}, volume
  6917 of {\em Lecture Notes in Computer Science}, pages 326--341. Springer,
  2011.

\bibitem{fox}
Pascal Junod and Serge Vaudenay.
\newblock {FOX}: A new family of block ciphers.
\newblock In Helena Handschuh and M.~Anwar Hasan, editors, {\em Selected Areas
  in Cryptography}, volume 3357 of {\em Lecture Notes in Computer Science},
  pages 114--129. Springer, 2004.

\bibitem{sloane}
F.J. MacWilliams and N.J.A. Sloane.
\newblock {\em The Theory of Error Correcting Codes}.
\newblock North-Holland Mathematical Library. North-Holland, 1978.

\bibitem{Sajadieh:FSE2012}
Mahdi Sajadieh, Mohammad Dakhilalian, Hamid Mala, and Pouyan Sepehrdad.
\newblock Recursive diffusion layers for block ciphers and hash functions.
\newblock In Anne Canteaut, editor, {\em Fast Software Encryption 2012}, volume
  7549 of {\em Lecture Notes in Computer Science}, pages 385--401. Springer
  Berlin Heidelberg, 2012.

\bibitem{Singleton:IT64}
Richard Singleton.
\newblock Maximum distance $q$-nary codes.
\newblock {\em IEEE Transactions on Information Theory}, 10(2):116--118, April
  1964.

\bibitem{WuWangWu:SAC2012}
Shengbao Wu, Mingsheng Wang, and Wenling Wu.
\newblock Recursive diffusion layers for (lightweight) block ciphers and hash
  functions.
\newblock In Lars~R. Knudsen and Huapeng Wu, editors, {\em Selected Areas in
  Cryptography}, volume 7707 of {\em Lecture Notes in Computer Science}, pages
  355--371. Springer, 2013.

\end{thebibliography}

\appendix

\section{Magma Code}\label{sec.magma}

Here is the Magma code for Algorithm~\ref{alg.BCH}. Simply run \texttt{BCH(k,s)} to get the set of all polynomials of
degree $k$ over $\F_{2^s}$ that yield MDS diffusion layers on $ks$ bits. Of course, these polynomials have to be written
as companion matrices which then have to be raised to the power $k$ to obtain the final MDS matrices.

\begin{Verbatim}
BCH := function(k,s)
  q := 2^s;
  F := GF(q);
  P := PolynomialRing(F);
  S := { };
  for z:=1 to q+1-2*k by 2 do
    a := RootOfUnity(2*k+z, F);
    Pext<X> := PolynomialRing(Parent(a));
    for i:=0 to 2*k+z-1 do
      b := a^i;
      if Order(b) eq (2*k+z) then
        g := &*[(X-b^l): l in [-1..k-2]];
        for l in [0..2*k+z-2] do
          g := (g*(X-b^(l+k-1))) div (X-b^(l-1));
          if IsCoercible(P,g) then
            Include(~S, P!g);
          end if;
        end for;
      end if;
    end for;
  end for;
  return S;
end function;
\end{Verbatim}

\end{document}